\title{Iterative Packing for Demand and Hypergraph Matching}
\author{Ojas Parekh}
\institute{Sandia National Laboratories\thanks{Sandia National Laboratories is a multi-program laboratory managed and operated by Sandia Corporation, a wholly owned subsidiary of Lockheed Martin Corporation, for the U.S. Department of Energy's National Nuclear Security Administration under contract DE-AC04-94AL85000.}, MS 1316, Albuquerque NM 87185, USA
\email{odparek@sandia.gov}}
\begin{document}
\maketitle

\begin{abstract}
Iterative rounding has enjoyed tremendous success in elegantly
resolving open questions regarding the approximability of problems
dominated by covering constraints. Although iterative rounding methods
have been applied to packing problems, no single method has emerged
that matches the effectiveness and simplicity afforded by the covering
case. We offer a  simple iterative packing technique that retains
features of Jain's seminal approach, including the property that the
magnitude of the fractional value of the element rounded during each
iteration has a direct impact on the approximation guarantee. We apply
iterative packing to generalized matching problems including demand matching
and $k$-column-sparse column-restricted packing ($k$-CS-PIP) and obtain approximation
algorithms that essentially settle the integrality gap for these problems.  We present
a simple deterministic $2k$-approximation for $k$-CS-PIP, where an $8k$-approximation was the best deterministic algorithm previously known.  The integrality gap in this case is at least $2(k-1+1/k)$.  We also give a deterministic $3$-approximation for a generalization of demand matching, settling its natural integrality gap.
\end{abstract}

\section{Introduction}

The (maximum weight) matching problem is cornerstone combinatorial optimization problem that has been studied for over 40 years.  The problem is succinctly stated as seeking a maximum weight collection of non-intersecting edges in a weighted graph.  Matching problems have enjoyed continual interest over the years and have been generalized in several orthogonal directions.  The \emph{$k$-hypergraph matching} problem, which is also known as \emph{$k$-set packing}, seeks a maximum weight collection of non-intersecting hyperedges in a weighted hypergraph, with the additional restriction that each hyperedge contain at most $k$ vertices.  Thus 2-hypergraph matching is precisely the matching problem.  While matching is in P, $k$-hypergraph matching is NP-complete for $k > 2$.

Another direction in which matching has recently been generalized is through the augmentation of demands.  The \emph{demand matching} problem, introduced by Shepherd and Vetta~\cite{ShepVet-MOR07}, is defined on a weighted graph $G$ that possesses a demand, $d_e \in \bbbz_+$ for each edge $e$, and a capacity $b_v \in \bbbz_+$ for each vertex $v$.  We seek a maximum weight collection of edges $M$ such that for any vertex $v$, the sum of the demands of the edges in $M$ incident upon $v$ is at most $b_v$ (i.e. $\sum_{e \in M \cap \delta(v)} d_e \leq b_v$).  Demand matching is a common generalization of the matching and knapsack problems: If $d_e = b_v = 1$ for all $e$ and $v$, we recover matching, and by taking $G$ to be a star we may model knapsack.  Demand matching is MAXSNP-complete, even in the uniform weight case~\cite{ShepVet-MOR07}.  Demands are a powerful feature that allow for richer modeling; however, in general the demand version of combinatorial optimization problem can be significantly more difficult to approximate than its unit demand counterpart~\cite{ChekuriEtAl-TALG07,ChekuriEtAl-APPROX09}.

\paragraph{\textbf{Problem definition.}} We consider the \emph{$k$-hypergraph demand matching} ($k$-HDM) problem, which is the natural common generalization of $k$-hypergraph matching and demand matching.  More formally, given a weighted hypergraph $H=(V,\mathcal{E})$ endowed with a demand $d_S \in \bbbz_+$ for each (hyper)edge $S \in \mathcal{E}$ and a capacity $b_v \in \bbbz_+$ for $v \in V$, the problem may be defined by the following integer program ($k$-HDM):
\begin{alignat*}{3}
\text{Maximize} &\quad& \sum_{S \in \mathcal{E}} c_S x_S &\\
  \text{subject to} &&
  \sum_{S | v \in S } d_S x_S &\leq b_v &\quad& \forall v \in V\\
  && x_S & \in \{0,1\} && \forall S \in \mathcal{E}\enspace,
\end{alignat*}
where $|S| \leq k$ for each edge $S \in \mathcal{E}$.  Note the latter restriction yields a constraint matrix with at most $k$ nonzeros per column.  This problem is also known as the \emph{$k$-column-sparse column-restricted packing integer program} ($k$-CS-CPIP) problem.  It is a specialization of the \emph{$k$-column-sparse packing integer program} ($k$-CS-PIP) problem, in which we allow each $S \in \mathcal{E}$ to have different demand values $d^S_v$ at each vertex $v$.

We make the assumption that for each edge $S$, $d_S \leq b_v$ for all $v \in S$.  This so-called \emph{no-clipping} assumption is easy to satisfy by deleting edges that violate it; however, this assumption is necessary in order for the natural LP relaxation to have a bounded integrality gap.  We note that the restriction that $x_S \in \{0,1\}$ is for the sake of exposition and that our results may be extended to apply to multiple copies of edges.

\paragraph{\textbf{Results.}}  Singh and Lau~\cite{SinghLau-STOC07} were the first to extend Jain's celebrated iterative rounding technique~\cite{Jain-Combinatorica01} to address packing constraints.  Their approach obtains an approximate solution that marginally violates the packing constraints by iteratively removing packing constraints involving only a small number of variables.  They were able to apply this elegant idea to resolve an open question concerning minimum cost degree-bounded spanning trees.  More recently, Chan and Lau~\cite{ChanLau-SODA10} employed an interesting combination of an iterative approach and the fractional local ratio method~\cite{BarYehudaEtAl-SICOMP06} to give the first approximation algorithm for the $k$-hypergraph matching problem that matched the integrality gap of the natural LP formulation, which had previousy been established as $k-1+1/k$~\cite{KahnEtAl-Combinatorica93}.  

Our main insight, which differentiates our approach from previous ones, is to iteratively maintain a sparse approximate convex decomposition of the current fractional solution.  This affords us a simple pseudo-greedy technique called iterative packing that yields improved approximation algorithms for $k$-HDM ($k$-CS-CPIP) and special cases that essentially settle the integrality gap of the natural LP formulations.  For instance, iterative packing is able to establish an integrality gap of $k-1+1/k$ for not just $k$-hypergraph matching but for $k$-hypergraph $b$-matching as well.

Akin to Jain's iterative rounding method for covering problems~\cite{Jain-Combinatorica01}, iterative packing is able to leverage large fractional edges to obtain stronger approximation guarantees.  As mentioned above, iterative packing produces and maintains a sparse approximate convex decomposition rather than a single solution, which is likely to have additional applications.  We motivate the technique on the standard matching problem in the next section.

Our first result is a deterministic $2k$-approximation for $k$-HDM ($k$-CS-CPIP) based on the natural LP relaxation.  The integrality gap of this relaxation is at least $2(k-1+1/k)$ (see Sect.~\ref{sec:k-HDM}), hence our result essentially closes the gap.  Prior to our work, deterministic $8k$-approximations~\cite{ChekuriEtAl-APPROX09,BansalEtAl-IPCO10} and a randomized $(ek+o(k))$-approximation~\cite{BansalEtAl-IPCO10} were the best known.  Moreover, even the special case of $k$-hypergraph matching cannot be approximated within a factor of  $\Omega(\frac{k}{\log k})$ unless P=NP~\cite{HazanEtAl-CC06}.

With a more refined application of iterative packing, we are able to derive a 3-approximation for $2$-CS-PIP, which generalizes the demand matching problem.  Prior to our work, a deterministic 3.5-approximation and randomized 3.264-approximation for demand matching were given by Shepherd and Vetta~\cite{ShepVet-MOR07}.  Chakrabarty and Pritchard~\cite{ChakPritch-Algorithmica10} recently gave a deterministic 4-approximation and randomized 3.764-approximation for $2$-CS-PIP.  Shepherd and Vetta also established a lower bound of 3 on the integrality gap of the natural LP for demand matching, hence our result settles the integrality gap for both $2$-CS-PIP and demand matching at 3.

\paragraph{\textbf{Related work.}} Chekuri, Mydlarz, and Shepherd~\cite{ChekuriEtAl-TALG07} presented an approximation algorithm with $O(k)$ guarantee for the restricted version of $k$-HDM in which $\max_S d_S \leq \min_v b_v$.  Their result is part of a framework which they developed based on work of Kolliopoulos and Stein~\cite{KolStein-MathProg04} that relates the integrality gap of a demand-endowed packing problem to its unit demand counterpart.  

While Chekuri \emph{et al.}~\cite{ChekuriEtAl-APPROX09} observed an $8k$-approximation for $k$-HDM, a recent flurry of work has also yielded $O(k)$-approximations for the more general $k$-CS-PIP problem.  Pritchard initiated the improvements with an iterative rounding based $2^k k^2$-approximation~\cite{Pritchard-ESA09}, which was improved to an $O(k^2)$-approximation by Chekuri, Ene, and Korula (see~\cite{ChakPritch-Algorithmica10} and~\cite{BansalEtAl-IPCO10}) and Chakrabarty and Pritchard~\cite{ChakPritch-Algorithmica10}.  Most recently, Bansal \emph{et al.}~\cite{BansalEtAl-IPCO10} devised a deterministic $8k$-approximation and a randomized $(ek+o(k))$-approximation.     

\paragraph{\textbf{Outline.}} In the following section we motivate iterative packing with an example.  In Sect.~\ref{sec:k-HDM} we apply the method to design a $2k$-approximation for $k$-HDM.  We finally present a more refined application in Sect.~\ref{sec:2-CS-PIP} to derive a 3-approximation for $2$-CS-PIP.

\section{Iterative Packing: An Example}\label{sec:iterative-packing}
We illustrate iterative packing on the maximum matching problem.  Although this is a simple application, it serves well to illustrate the method.  Consider the natural degree-based LP relaxation $P_{M}(G)$ for the maximum matching problem on a graph $G=(V,E)$: 
\begin{alignat}{3}
\text{Maximize} &\quad& \sum_{e \in E} c_e x_e &
  \tag*{$P_{M}(G)$}\label{lp:PM}\\ \text{subject to} &&
  \sum_{e \in \delta(v)} x_e &\leq 1 &\quad& \forall v \in
  V\label{ineq:PM:pack}\\ && 0 \leq x_e & \leq 1 && \forall e \in
  E. \label{ineq:PM:nonneg}
\end{alignat}
Given a feasible fractional solution $x^* \in P_{M}(G)$, the iterative packing procedure obtains an \emph{$\alpha$-approximate} convex decomposition of $x^*$, \begin{equation}\label{eq:pack:convex-decomp}
\alpha x^* = \sum_{i \in \mathcal{I}} \lambda_{i} \chi^{i}\enspace,
\end{equation}
for some $\alpha \in (0,1]$, where each $\chi^{i} \in P_{M}(G)$ is an integral solution (and $\sum_{i} \lambda_{i} = 1$ and $\lambda_{i} \geq 0$ for all $i$).  Iterative packing in its most basic form directly produces a sparse decomposition, namely one with $|\mathcal{I}| \leq |E|+1$.  Even when this is not the case, we can apply elementary linear algebra to retain at most $|E|+1$ solutions (more generally $n+1$, where $x^* \in \bbbr^n$).  A procedure to accomplish the latter is related to Carath{\'e}odory's Theorem and makes for a good exercise.

The construction of the decomposition \eqref{eq:pack:convex-decomp} implies that one can find an integral solution with cost at least $\alpha(c x^*)$, thus $1/\alpha$ corresponds to the approximation guarantee of the resulting approximation algorithm. A nice feature is that the decomposition gives us a cost oblivious representation of an approximate solution.

For $P_{M}(G)$, we first show that choosing $\alpha = 1/2$ suffices.  This yields a $2$-approximation while also showing that the integrality gap of $P_{M}(G)$ is at most 2.  We then show that we may select $\alpha=2/3$ by leveraging the fact that extreme points of $P_{M}(G)$ must contain an edge $e$ with $x_e \geq 1/2$ (in fact this holds for all $e$).  The latter precisely matches the integrality gap of $P_{M}(G)$. This is interesting, since much like iterative rounding, iterative packing offers insight into how large fractional components can facilitate the approximation of packing problems.  Akin to iterative rounding, iterative packing is motivated by a simple idea.  We start with a fractional solution $x^*$ and:
\begin{enumerate}
\item Remove an edge $e$ (without otherwise modifying the instance) 
\item Recursively obtain an $\alpha$-approximate convex decomposition of the resulting fractional solution, $\bar{x}^*$
\item Pack $e$ into precisely an $\alpha x^*_{e}$ fraction of the integral solutions. 
\end{enumerate}

The key, of course, is showing that the last step can always be performed successfully. For this to work, we require that for any fractional (or perhaps extreme point) solution $x^*$ there exists an $e \in E$ with
\begin{equation}\label{eq:pack:insert-e}
\sum_{i \in \mathcal{I}_{e}} \lambda_{i} \geq \alpha x^*_e\enspace,
\end{equation}
where $\alpha \bar{x}^* = \sum_{i \in \mathcal{I}} \lambda_{i} \chi^{i}$ is an arbitrary approximate convex decomposition of the residual solution, $\bar{x}^*$, and $i \in \mathcal{I}_{e}$ indicates that $\chi^i$ is able to accommodate the edge $e$ (i.e. $\chi^i \cup e$ is a matching).

Although we may well be able to pack $e$ into a fraction of the integral solutions that is larger than an $\alpha x^*_e$, to maintain our implicit inductive hypothesis we must ensure that $e$ is packed into exactly an $\alpha x^*_{e}$ fraction of solutions.  To accomplish this, we may have to clone some solution $\chi^{i}$, insert $e$ into exactly one of the two copies of $\chi^{i}$, and distribute the multiplier $\lambda_{i}$ among the copies so that $e$ appears in the requisite fraction of solutions.  The base case, which contains no edges, selects the empty solution with a multiplier of 1.  Thus if \eqref{eq:pack:insert-e} holds universally for a particular value of $\alpha$, then we can efficiently obtain an $\alpha$-approximate convex decomposition of $x^*$ consisting of at most $|E|+1$ integral solutions.  Selecting the best of these gives us the corresponding approximation algorithm.

To see that \eqref{eq:pack:insert-e} holds when $\alpha = 1/2$, consider some fractional solution $x^*$ and an arbitrary edge $e = uv \in E$ with $x^*_{e} > 0$.  Obtaining $\bar{x}^*$ as above by deleting $e$, we have that 
$$\max\{\bar{x}^*(\delta(u)),\,\bar{x}^*(\delta(v))\} \leq 1-x^*_{e}\enspace,$$ 
hence in any convex decomposition $\alpha\bar{x}^*$, at most a $2\alpha(1-x^*_{e})$ fraction of the $\chi^{i}$ do not accomodate $e$, hence we require $1-2\alpha(1-x^*_{e})  \geq  \alpha {x^*}_{e}$,
which is equivalent to 
\begin{equation}\label{eq:pack:apx}
\alpha \leq \frac{1}{2 - x^*_e}
\end{equation}
Thus by selecting $\alpha = 1/2$, we may successfully pack any edge $0 \leq x^*_e \leq 1$ in the last step of our algorithm.  However, by selecting a large edge at each iteration we can improve the bound.  It is well known that extreme points of $P_{M}(G)$ are $1/2$-integral, so we may actually take $\alpha=1/(2-1/2)=2/3$.  More generally -- just as with iterative rounding -- it suffices to show
that an extreme point always contains some edge of large fractional value.  We explore this idea in conjunction with $2$-CS-PIP in Sect.~\ref{sec:2-CS-PIP}.  However, in the next section we show that the framework above with a simple modification yields a $2k$-approximation for $k$-HDM.

\section{Iterative Packing for $k$-Hypergraph Demand Matching}\label{sec:k-HDM}

The results in this section are obtained using the framework outlined for our the matching problem in the previous section:
\begin{enumerate}
\item Remove a (hyper)edge $S$ (without otherwise modifying the instance) 
\item Recursively obtain an $\alpha$-approximate convex decomposition of the resulting fractional solution, $\bar{x}^*$
\item Pack $e$ into precisely an $\alpha x^*_{S}$ fraction of the integral solutions. 
\end{enumerate}
However, directly applying the algorithm above to $k$-HDM does not yield a bounded approximation guarantee (w.r.t. $k$).  We show that simply selecting an edge $S$ with minimal demand in step 1 above yields a $2k$-approximation.

As with our analysis in the previous section, the crux lies in being able to carry out step 3 successfully.  Following the analysis in Sect.~\ref{sec:iterative-packing}, let  $\alpha \bar{x}^* = \sum_{i \in \mathcal{I}} \mu_{i} \chi^{i}$ be an arbitrary convex decomposition of the residual solution $\bar{x}^*$ obtained in step 2.  We may inductively assume the existence of such a decomposition (with a trivial base case).  To determine whether the edge $S$ may be packed in the requisite $\alpha x^*_S$ fraction of the integral solutions $\chi^i$, we consider the potential fraction of solutions in which $S$ is blocked at each $u \in S$.  For $u \in S$, let $\beta_u$ be the fraction of solutions in which $S$ cannot be packed at $u$.  We may think of $\beta_u$ as the fraction of bad solutions in terms of packing $S$.  In the worst case, $S$ is blocked pairwise disjointly at each incident vertex.

\begin{lemma}\label{lem:apx-ineq} Edge S may be packed into an $\alpha x^*_S$ fraction of the integral solutions $\chi^i$, provided
$$1 - \sum_{u \in E} \beta_u \geq \alpha x^*_S\enspace.$$
\end{lemma}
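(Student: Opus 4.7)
The argument is essentially bookkeeping on top of a union bound. Assume inductively that step~2 has produced an $\alpha$-approximate convex decomposition $\alpha \bar{x}^* = \sum_{i \in \mathcal{I}} \mu_i \chi^i$ of the residual solution, with $\sum_i \mu_i = 1$. Call an integral solution $\chi^i$ \emph{good} if $\chi^i \cup \{S\}$ is still feasible for $k$-HDM; otherwise it is \emph{bad}. By definition, $\chi^i$ is bad precisely when it is blocked at some vertex $u \in S$, i.e.\ when $\sum_{S' \ni u,\,S' \in \chi^i} d_{S'} > b_u - d_S$. Thus the set of bad solutions is the union over $u \in S$ of the solutions blocked at $u$, whose total multiplier is $\beta_u$, so by a union bound the total multiplier of good solutions is at least $1 - \sum_{u \in S} \beta_u$.

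\textbf{Executing the packing step.} Under the hypothesis $1 - \sum_{u \in S} \beta_u \geq \alpha x^*_S$, so the good solutions carry total multiplier at least $\alpha x^*_S$. Order the good $\chi^i$ arbitrarily and greedily accumulate their multipliers until the running sum reaches $\alpha x^*_S$. If the last solution $\chi^{i^\star}$ overshoots, split it: replace it by two copies, one with multiplier equal to the ``leftover'' needed to reach exactly $\alpha x^*_S$ and the other with the complementary share of $\mu_{i^\star}$. Insert $S$ into precisely the solutions (including that first copy of $\chi^{i^\star}$) whose multipliers we accumulated, leaving all remaining solutions untouched. This produces a new multiset of integral solutions whose multipliers still sum to $1$, whose $S$-coordinate contribution is exactly $\alpha x^*_S$, and whose coordinate for any $S' \neq S$ is unchanged at $\alpha \bar{x}^*_{S'} = \alpha x^*_{S'}$. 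Hence the new collection is an $\alpha$-approximate convex decomposition of $\alpha x^*$.

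\textbf{Where the work actually lies.} The lemma itself is essentially the accounting formalism; the real obstacle — making $1 - \sum_{u \in S} \beta_u \geq \alpha x^*_S$ hold — is postponed to the analysis of the choice of $S$ in step~1 (the minimum-demand edge) and is not part of this proof. Feasibility of the new solutions is immediate since we only insert $S$ into good ones, and sparsity of the decomposition is preserved since at most one solution is cloned per iteration.
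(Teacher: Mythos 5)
Your proof is correct and takes essentially the same route as the paper: the paper's own proof is just the union-bound observation that $1-\sum_{u\in S}\beta_u$ lower-bounds the fraction of solutions that can accommodate $S$, and the cloning/splitting step you spell out is exactly the mechanism the paper describes in its general framework (Sect.~\ref{sec:iterative-packing}) for packing $S$ into precisely an $\alpha x^*_S$ fraction. Your write-up is simply a more explicit version of the same argument.
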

\begin{proof}
The lemma follows from the same reasoning used to derive a union bound if $x^*$ were a probability distribution.  The quantity $\sum_{u \in S} \beta_u$ represents the maximum fraction of solutions in which $S$ is blocked at some incident vertex, hence $1 - \sum_{u \in S} \beta_u$ is the minimum fraction of solutions into which it is feasible to insert $S$.
\qed\end{proof}

We may derive a $1/\alpha$-approximation guarantee on the performance of iterative packing by bounding $\beta_u$ and selecting $\alpha$ so that Lemma~\ref{lem:apx-ineq} is satisfied.  For this purpose we find it useful to think of the residual convex decomposition, $\alpha \bar{x}^* = \sum_{i \in \mathcal{I}} \mu_i \chi^i$, obtained in step 2. above, as inducing a collection of bins at each $u \in V$ where each bin has capacity $b_u$.  Each $i \in \mathcal{I}$ induces a bin of width $\mu_i$; the height $h_i$ is equal to the sum of the demands of the edges incident upon $u$ that appear in the solution $\chi^i$; that is 
$$h_i := \sum_{S \in\chi^i | u \in S} d_S\enspace.$$
Thus the aggregate capacity of the bins is at most $(\sum_{i \in \mathcal{I}} \mu_i) b_u = b_u$, where each bin contains a volume of $\mu_i h_i$.  

Next we bound the fraction of bad solutions at $u$.  For this, we define $\bar{\delta} = \min_{T \in \mathcal{E} | T \not=S} d_T$, i.e. $\bar{\delta}$ is the minimal demand in the residual instance.

\begin{lemma}\label{lem:packing-bound}
For the convex decomposition, 
$\alpha \bar{x}^* = \sum_{i \in \mathcal{I}} \mu_i \chi^i$, we have
$$\beta_u \leq \alpha\frac{b_u - d_S x^*_S}{\max\{b_u - d_S + 1,\, \bar{\delta}\}}
\enspace.$$
\end{lemma}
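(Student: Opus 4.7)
The plan is to bound $\beta_u$ by pitting the total demand volume contained in the bad bins at $u$ against the aggregate demand volume admitted by the LP constraint at $u$. First I would read the approximate convex decomposition coordinatewise: since $\alpha \bar{x}^* = \sum_i \mu_i \chi^i$ and $\bar{x}^*_T = x^*_T$ for every edge $T \neq S$, one has $\sum_i \mu_i \chi^i_T = \alpha x^*_T$ for all such $T$. Multiplying by $d_T$ and summing over edges incident to $u$ yields
\[
\sum_{i \in \mathcal{I}} \mu_i h_i \;=\; \alpha \sum_{T \neq S,\, u \in T} d_T\, x^*_T \;\leq\; \alpha\bigl(b_u - d_S x^*_S\bigr),
\]
where the final inequality is the LP packing constraint $\sum_{T \ni u} d_T x^*_T \leq b_u$ after isolating the contribution of $S$ on the right-hand side. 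In bin-language this just says the total volume used across all bins at $u$ is at most $\alpha(b_u - d_S x^*_S)$.

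Next I would establish a lower bound on $h_i$ for every bad bin. A bin is bad precisely when inserting $S$ would overflow capacity $b_u$, i.e., $h_i + d_S > b_u$; since all demands and capacities are integers this is equivalent to $h_i \geq b_u - d_S + 1$. Simultaneously, a bad bin cannot be empty: the no-clipping assumption gives $d_S \leq b_u$, so a bin with $h_i = 0$ could accommodate $S$ and is therefore good. Hence every bad bin contains at least one edge $T \neq S$, which contributes $d_T \geq \bar{\delta}$ to its height, giving $h_i \geq \bar{\delta}$. Combining both bounds, $h_i \geq \max\{b_u - d_S + 1,\, \bar{\delta}\}$ for every bad bin $i$.

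The final step is a one-line averaging argument. Writing $M := \max\{b_u - d_S + 1,\, \bar{\delta}\}$, the bad-bin lower bound gives
\[
M \beta_u \;=\; M \sum_{i \text{ bad}} \mu_i \;\leq\; \sum_{i \text{ bad}} \mu_i h_i \;\leq\; \sum_{i \in \mathcal{I}} \mu_i h_i \;\leq\; \alpha\bigl(b_u - d_S x^*_S\bigr),
\]
and dividing through by $M$ yields exactly the claimed inequality. The one subtle point, which I expect to be the main obstacle, is justifying the $\bar{\delta}$ term in the denominator: without the observation that bad bins must be non-empty (which itself relies on no-clipping), one could only use the bound $h_i \geq b_u - d_S + 1$, and the resulting estimate would degrade badly when $b_u - d_S$ is small compared to $\bar{\delta}$. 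This stronger lower bound is what will later allow the minimum-demand selection rule in Step 1 to yield a $2k$-approximation.
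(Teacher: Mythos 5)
Your proof is correct and follows essentially the same route as the paper's: bound the total bin volume at $u$ by $\alpha(b_u - d_S x^*_S)$, lower-bound the height of every bad bin by $\max\{b_u - d_S + 1,\,\bar{\delta}\}$ using integrality and the no-clipping assumption, and divide. The only difference is that you explicitly derive the volume bound coordinatewise from the convex decomposition and the LP constraint at $u$, a step the paper states without elaboration.
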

\begin{proof}
Let $\mathcal{I}_{\overline{S}}$ be the indices of the bins which cannot accommodate $S$.  Thus by definition,
$$\beta_u = \sum_{i \in \mathcal{I}_{\overline{S}}} \mu_i\enspace.$$
The total volume of all such bins is at most the total $u$-volume of $\alpha \bar{x}$, which does not contain $S$:
$$\sum_{i \in \mathcal{I}_{\overline{S}}} \mu_i h_i \leq \alpha(b_u - d_S x^*_S)\enspace.$$
Each bin in $\mathcal{I}_{\overline{S}}$ must have height large enough to block $S$ and must also contain at least one edge since $S$ fits on its own, by the no clipping assumption.  Thus $h_i \geq \max\{b_u - d_S + 1, \,\bar{\delta}\},$ yielding the desired result when coupled with the above equation and inequality:
$$\max\{b_u - d_S + 1, \,\bar{\delta}\}\beta_u = \max\{b_u - d_S + 1, \,\bar{\delta}\}\sum_{i \in \mathcal{I}_{\overline{S}}} \mu_i \leq \alpha(b_u - d_S x^*_S)\enspace.$$
\qed\end{proof}

Unfortunately when $d_S$ is large and $x^*_S$ is small, the above bound may be large.  However, by appealing to special cases of $k$-HDM or by more carefully selecting the edge $S$, the bound $\beta_u$ becomes manageable.  For instance, consider the case of the $k$-hypergraph $b$-matching problem, obtained when $d_S = 1$ for all $S$.  In this case $\beta_u \leq \alpha$ by Lemma~\ref{lem:packing-bound}, which allows us to satisfy the hypothesis of Lemma \ref{lem:apx-ineq} by selecting $\alpha$ such that:
\begin{equation}\label{eq:gamma-leq-alpha}
\alpha \leq \frac{1}{x^*_S + k} \Rightarrow \alpha x^*_S \leq 1-k\alpha \leq 1 - \sum_{u \in E} \beta_u\enspace.
\end{equation}
Since $x^*_S \leq 1$ for all $E$, we may universally select $\alpha = \frac{1}{k+1}$ for all $S$, yielding an approximation guarantee of $k+1$.  Krysta~\cite{Krysta-MFCS05} proved that a greedy algorithm also achieves this bound, and Young and Koufogiannakis~\cite{YoungKouf-DISC09} give a primal dual algorithm achieving a bound of $k$, which is the best known.  Although we omit the details in this article, iterative packing can be used to show that  the integrality gap of the natural LP relaxation for $k$-hypergraph $b$-matching is at most $k-1+1/k$, which settles the gap.

Turning our attention back to $k$-HDM, the ``max'' in Lemma $\ref{lem:packing-bound}$'s bound hints at our strategy: we shall always select an edge $S$ with minimal demand, so that $\bar{\delta}$ is large enough to be of value.  In fact the resulting approximation algorithm applies to a generalization of $k$-HDM ($k$-CS-CPIP) in which we allow each edge $S$ to have a different valued demand, $d^S_v$ at each vertex $v$, as is allowed in $k$-CS-PIP.  However, we require that the edges can be ordered, $S_1, S_2, \ldots, S_m$, so that for any distinct $S_i, S_j$ with $u \in S_i \cap S_j$, we have $d^{S_i}_u \leq d^{S_j}_u$ if $i \leq j$; that is, the demands monotonically increase at every vertex.  Note that this is clearly the case with $k$-HDM, where $d_S = d^S_u = d^S_v$ for all $u,v \in S$.  We may simply sort the demands over the edges to obtain such an ordering.  We perform iterative packing with such an ordering (i.e. select an S with minimal demand).  Now when we insert $S$ back into the approximate convex decomposition of $\alpha \bar{x}$, we may assume that $d_S \leq \bar{\delta}$. 
\begin{theorem}Iterative packing applied to $k$-CS CPIP with the edges inserted in order of nonincreasing demand is a $2k$-approximation.
\end{theorem}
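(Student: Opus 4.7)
My plan is to prove by induction on the number of edges that iterative packing, when it peels off an edge of minimum demand at each recursive step (equivalently, inserts edges in order of nonincreasing demand as the recursion unwinds), produces an $\alpha$-approximate convex decomposition of any feasible fractional solution $x^*$ with $\alpha = 1/(2k)$. The claimed $2k$-approximation then follows by returning the best integral $\chi^i$ in the decomposition.

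The base case (no edges) is trivial: the empty solution with multiplier $1$ is a $1$-approximate decomposition. For the inductive step, let $S$ be the minimum-demand edge and suppose the recursive call returns $\alpha \bar{x}^* = \sum_i \mu_i \chi^i$ for the residual instance. I need to reinsert $S$ into exactly an $\alpha x^*_S$ fraction of the $\chi^i$, which by Lemma~\ref{lem:apx-ineq} reduces to verifying $\sum_{u \in S} \beta_u + \alpha x^*_S \leq 1$.

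The pivotal point is that the minimum-demand ordering ensures $\bar{\delta} \geq d_S$ at the moment of reinsertion, so Lemma~\ref{lem:packing-bound} specializes to
\[
\beta_u \;\leq\; \alpha \cdot \frac{b_u - d_S x^*_S}{\max\{b_u - d_S + 1,\, d_S\}}.
\]
I would then show that the ratio on the right is at most $2 - x^*_S$ via a short two-case split depending on which argument of the max dominates, invoking the no-clipping bound $d_S \leq b_u$ in the case $b_u \leq 2 d_S - 1$ to control $b_u / d_S \leq 2 - 1/d_S$. Summing over the $|S| \leq k$ vertices yields $\sum_{u \in S} \beta_u \leq k\alpha(2 - x^*_S)$.

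Plugging this into the hypothesis of Lemma~\ref{lem:apx-ineq} gives $\alpha(2k - (k-1) x^*_S) \leq 1$, whose worst case over $x^*_S \in [0,1]$ is $x^*_S = 0$, forcing exactly $\alpha = 1/(2k)$ and closing the induction. The main obstacle I foresee is extracting the clean factor-of-$(2 - x^*_S)$ bound in the third step: individually, neither $b_u - d_S + 1$ nor $\bar{\delta}$ dominates the numerator tightly enough, and it is only because their maximum necessarily meets $(b_u + 1)/2$, combined with the minimum-demand rule guaranteeing $\bar{\delta} \geq d_S$, that the argument goes through.
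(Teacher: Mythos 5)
Your proposal is correct and follows essentially the same route as the paper: induction with the minimum-demand edge removed first, Lemma~\ref{lem:apx-ineq} combined with the bound of Lemma~\ref{lem:packing-bound}, the ordering guaranteeing $\bar{\delta} \geq d_S$, and the observation that the max in the denominator is at least $(b_u+1)/2$. The only cosmetic difference is that you bound each ratio by $2 - x^*_S$ per vertex via a case split, whereas the paper exploits linearity in $x^*_S$ of the whole sum and checks the endpoints $x^*_S \in \{0,1\}$; both yield $\alpha = 1/(2k)$.
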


\begin{proof}   To simplify our analysis, we conduct it in terms of $1/\alpha$.  The stipulation of Lemma~\ref{lem:apx-ineq} may be expressed as:
$$x^*_S + \sum_{u \in S} \frac{\beta_u}{\alpha} \leq \frac{1}{\alpha}\enspace.$$ 
Our goal is to show that the above holds when $1/\alpha=2k$.  By applying the bound on $\beta_u$ from Lemma~\ref{lem:packing-bound}, we reduce our task to showing that
$$x^*_S + \sum_{u \in S} \frac{b_u - d_S x^*_S}{\max\{b_u - d_S + 1,\, \bar{\delta}\}} \leq 2k\enspace,$$ 
for any value of $x^*_S \in [0,1]$.  Note that when the left hand side above is considered as a function of the parameters $b_u$, $d_S$, and $x^*_S$, it is linear in $x^*_S$.  Thus it is maximized in one of the cases $x^*_S = 0$ or $x^*_S=1$.  When $x^*_S = 1$ we indeed have
$$1 + \sum_{u \in S} \frac{b_u - d_S}{\max\{b_u - d_S + 1,\, \bar{\delta}\}} \leq  1 + \sum_{u \in S} \frac{b_u - d_S}{b_u - d_S + 1} \leq 1 + k \leq 2k\enspace.$$
On the other hand when $x^*_S = 0$, we have
$$0 + \sum_{u \in S} \frac{b_u}{\max\{b_u - d_S + 1,\, \bar{\delta}\}} \leq
  2\sum_{u \in S} \frac{b_u}{b_u + \bar{\delta} - d_S + 1} \leq 2k\enspace,$$
where the former inequality follows because $\max\{x,y\} \geq (x+y)/2$, and the latter holds because our ordering of the edges gives us $\bar{\delta} \geq d_S$. 
\qed\end{proof}

\paragraph{\textbf{Integrality gap.}} Our result essentially settles the integrality gap of the natural formulation.  As noted in~\cite{KahnEtAl-Combinatorica93} the projective plane of order $k-1$ yields an integrality gap of at least $k-1+1/k$, even for the case of $k$-hypergraph matching.  For $k$-HDM (and consequently $k$-CS PIP) one may obtain a lower bound approaching $2(k-1+1/k)$ by again considering a projective plane of order $k-1$, setting all the demands to $d$ and each capacity to $b = 2d-1$.  

\section{Improvements for 2-CS-PIP and Demand Matching}\label{sec:2-CS-PIP}

Here we consider the general $k$-CS-PIP rather than $k$-HDM/$k$-CS-CPIP, but for the special case $k=2$.  This case is of particular interest as it is natural generalization of the demand matching problem (i.e. $2$-CS-CPIP), which itself is combination of both $b$-matching and knapsack type problems in graphs.

Shepherd and Vetta~\cite{ShepVet-MOR07} were able to show that integrality gap of the natural LP formulation was between 3 and 3.264; however, establishing the exact value has remained an open problem prior to our work.  We are able to show that there is indeed a 3-approximation based on the natural LP for not only demand matching but also the more general $2$-CS-PIP problem.  This consequently settles the integrality gaps for both problems.  Although designing a polynomial time algorithm takes a bit of work, iterative packing allows us to establish the integrality gap relatively easily.

\subsection{Establishing the Integrality Gap}

As observed in the introduction for the standard matching problem, iterative packing readily yields an upper bound of 2 on the integrality gap of the natural formulation, which is sharpened to the optimal value of 3/2 (= $k-1+1/k$) by observing that extreme points must contain some large component ($x_e \geq 1/2$) and iterating only on such edges.  For $2$-CS PIP we also apply iterative packing solely on large components in extreme points -- in fact, those with $x_e = 1$ when they exist.  

An interesting phenomenon with general demand packing problems is that $1$-edges (i.e. $x_e=1$) cannot simply swept under the rug as with $\{0,1\}$-demand problems.  For the latter, one can simply assume such edges are selected in a solution and obtain a straightforward residual instance and feasible fractional solution.  With general demand problems, however, such an approach may violate the no clipping assumption.  Iterative packing, on the other hand, performs quite well on $1$-edges, which allows us to rather easily prove an optimal integrality gap.  Although we will also derive this result in the next section by means of a polynomial time algorithm, in this section we attack only the integrality gap with a short proof that highlights the effectiveness of iterative packing when augmented with simple insights about the nature of extreme points.  Our first task is to show that extreme points without $1$-edges admit a manageable structure.  We note that since our discussion involves the more general 2-CS-PIP rather than demand matching, each edge $uv$ may have distinct demands $d^{uv}_u$ and $d^{uv}_v$.  

\begin{lemma}\label{lem:2-CS-PIP-ext-point}
If $\hat{x}$ is an extreme point of the natural LP then the fractional part of $\hat{x}$ induces connected components with at most one cycle.  If we have $0 < \hat{x} < 1$, then $\hat{x}$ induces vertex disjoint cycles.
\end{lemma}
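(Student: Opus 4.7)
My plan is to use a standard rank argument for extreme points, localized to each connected component of the fractional subgraph, with the no-clipping assumption providing the extra leverage required for the second claim.

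Let $F = \{e : 0 < \hat{x}_e < 1\}$ denote the set of fractional edges. Since no box constraint $0 \leq x_e \leq 1$ is tight on an edge of $F$, the fractional coordinates of $\hat{x}$ must be pinned down entirely by tight vertex capacity constraints. A vertex capacity constraint contributes a nonzero row to the fractional subsystem only if that vertex is incident to some edge of $F$, and vertex constraints at vertices in different connected components of the fractional subgraph involve disjoint sets of fractional variables. Applying the extreme-point rank argument within each connected component $C$ separately, I obtain $|E(C)| \leq |V(C)|$, and since $C$ is connected this forces $C$ to be either a tree or to contain exactly one cycle, proving the first claim.

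For the second claim, assume $0 < \hat{x}_e < 1$ for every edge $e$, so in particular there are no unit edges contributing to the right-hand sides of any capacity constraint. Suppose for contradiction that some fractional component $C$ is not a single cycle; then $C$ has a vertex $v$ of degree one in the fractional subgraph, with unique incident edge $e$. A short counting argument shows that the capacity constraint at $v$ must appear among the tight constraints supporting $\hat{x}$: when $C$ is unicyclic the bound $|E(C)| = |V(C)|$ forces every vertex of $C$ to supply a tight constraint, while when $C$ is a tree at most one vertex of $V(C)$ can be non-tight, and any tree with at least one edge has two leaves. The tight leaf constraint then reduces to $d^e_v \hat{x}_e = b_v$, so by no-clipping $\hat{x}_e = b_v / d^e_v \geq 1$, contradicting $\hat{x}_e < 1$. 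Hence every component is a simple cycle, and vertex-disjointness is immediate since the cycles sit in distinct components.

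The main obstacle I anticipate is the localization step, specifically verifying that the rank of tight capacity constraints decomposes across connected components of the fractional subgraph so that $|E(C)| \leq |V(C)|$ holds per component rather than only in aggregate. Once that decomposition is cleanly set up, the leaf argument driven by the no-clipping inequality $d^e_v \leq b_v$ is the only further ingredient required, and the rest is routine graph theory.
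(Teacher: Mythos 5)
Your proof follows essentially the same route as the paper's: the per-component rank bound $|F(C)| \leq |V(C)|$ on the fractional support for the first claim, and a tight leaf capacity constraint combined with the no-clipping assumption to force $\hat{x}_e \geq 1$ for the second. Your counting argument for why some leaf's constraint must be tight (all vertices tight in the unicyclic case, at most one non-tight vertex in the tree case versus two leaves) is in fact more explicit than the paper, which simply asserts $d^e_l \hat{x}_e = b_l$ at a leaf; otherwise the two arguments coincide.
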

\begin{proof} Let $F \subseteq E$ be the fractional support of $\hat{x}$.  For each connected component $C$ induced by $F$,  the only tight constraint in which $e \in F(C)$ may appear are degree constraints,
 $$ \forall u \in V: \sum_{uv \in \delta(u)} d^{uv}_u x_{uv} \leq b_u\enspace, $$
for $u \in V(C)$.  Thus $|F(C)| \leq |V(C)|$, otherwise we would find that a basis for $\hat{x}$ contains linearly dependent columns among those of $F(C)$.  This establishes the first claim of the lemma.

If $0 < \hat{x} < 1$, then no component induced by $\hat{x}$ may contain an edge $e$ incident to a leaf $l$, otherwise we would have $d^e_l \hat{x}_e = b_l$, implying $\hat{x}_e = 1$ by the no clipping assumption.  Thus we must have $|F(C)| = |V(C)|$ for each component $C$, and since we have no leaves, $C$ must be a cycle.
\qed\end{proof}

Coupled with our earlier analysis of iterative packing, this is the only fact we need to establish the integrality gap of 3.  Consider the following non-efficient extension of iterative packing:

\begin{enumerate}
\item If $x^*$ is not an extreme point, obtain a convex decomposition into extreme points, $x^* = \sum_i \mu_i \hat{x}^i$, and apply the algorithm to each extreme point $\hat{x}^i$.
\item If the extreme point $\hat{x}$ contains an integral edge let $e$ be such an edge, otherwise let $e$ be any edge.
\item Delete $e$ to obtain $\bar{x}$ and recursively construct an approximate convex decomposition into integral solutions, $\frac{1}{3}\bar{x} = \sum_j \lambda_j \chi^j$.
\item Insert $e$ into exactly a $\frac{1}{3}x_e$ fraction of the solutions $\chi^j$.
\end{enumerate}

\begin{lemma}\label{lem:pack-1-edges}
Step \mbox{4.} above can always be completed successfully.
\end{lemma}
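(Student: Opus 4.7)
My plan is to reduce, via Lemma~\ref{lem:apx-ineq}, to verifying $\sum_{u \in e} \beta_u \leq 1 - \frac{1}{3}x^*_e$ for the bad-solution fractions $\beta_u$ induced by the recursive decomposition $\frac{1}{3}\bar{x} = \sum_j \lambda_j \chi^j$, and then to case-split on step~2's choice. Either $\hat{x}$ contains a $1$-edge, in which case $e$ is chosen with $x^*_e = 1$; or it does not, in which case Lemma~\ref{lem:2-CS-PIP-ext-point} forces the fractional support to consist of vertex-disjoint cycles and $e$ to lie on one of them.

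In the $1$-edge case, I would invoke Lemma~\ref{lem:packing-bound}; its proof extends verbatim when edges carry vertex-specific demands $d^e_u$. Plugging in $x^*_e = 1$ and using only the $b_u - d^e_u + 1$ branch of the maximum gives $\beta_u \leq \frac{1}{3}\cdot\frac{b_u - d^e_u}{b_u - d^e_u + 1} \leq \frac{1}{3}$ at each of the two endpoints of $e$, so $\sum_{u \in e}\beta_u \leq \frac{2}{3} = 1 - \frac{1}{3}x^*_e$. In the cycle case, $e$ sits on a cycle of the support, so each endpoint $u$ of $e$ has exactly one other support-edge $f_u$ incident to it. Since $\sum_j \lambda_j \chi^j_g = \frac{1}{3}\bar{x}_g = 0$ whenever $\bar{x}_g = 0$, each $\chi^j$ must be supported on the edges of $\bar{x}$. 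Hence the only $u$-incident edge other than $e$ that any $\chi^j$ can contain is $f_u$, and a bin can block $e$ at $u$ only if it contains $f_u$. This yields $\beta_u \leq \sum_{j:\,\chi^j_{f_u}=1}\lambda_j = \frac{1}{3}\bar{x}_{f_u} = \frac{1}{3}x^*_{f_u} \leq \frac{1}{3}$, so again $\sum_{u \in e}\beta_u \leq \frac{2}{3} \leq 1 - \frac{1}{3}x^*_e$.

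The delicate step is the $1$-edge case: when $b_u$ is large relative to $d^e_u$, the ratio $\frac{b_u - d^e_u}{b_u - d^e_u + 1}$ approaches $1$, so the per-endpoint bound of essentially $\frac{1}{3}$ is nearly tight, and it is precisely the integrality $x^*_e = 1$ that supplies the slack we need. This is what motivates step~2's preference for $1$-edges and distinguishes the argument from the generic $k$-HDM analysis of Sect.~\ref{sec:k-HDM}, which had no analogous hook. The cycle case, by contrast, is nearly immediate once the local $2$-regular structure of the support from Lemma~\ref{lem:2-CS-PIP-ext-point} is in hand.
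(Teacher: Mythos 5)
Your proof is correct and follows essentially the same route as the paper: the $1$-edge case via Lemma~\ref{lem:packing-bound} with $x^*_e=1$ giving $\beta_u\leq\alpha=1/3$ per endpoint, and the cycle case via Lemma~\ref{lem:2-CS-PIP-ext-point} giving $\beta_u\leq\alpha\hat{x}_{f_u}\leq\alpha$ from the single remaining incident edge $f_u$. The only (trivial) omission is the $\hat{x}_e=0$ subcase, which the paper dismisses parenthetically.
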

\begin{proof} Suppose there is an integral $\hat{x}_e$ and that $\hat{x}_e = 1$ ($\hat{x}_e = 0$ clearly works). Substituting $\hat{x}_e = 1$ into the bound from Lemma~\ref{lem:packing-bound} yields
$\beta_u \leq \alpha$ for $u \in e$, which we have already observed (see~\eqref{eq:gamma-leq-alpha}) allows us to select $\alpha = 1/(k+1) = 1/3$.  On the other hand, if there is no integral edge by Lemma~\ref{lem:2-CS-PIP-ext-point}, $\hat{x}$ induces a 2-regular graph.  When we delete $e$ in this case, at each endpoint $u \in e$ there is a single edge, $f_u$ remaining.  Thus $\beta_u \leq \alpha \hat{x}_{f_u} \leq \alpha$ in this case as well.
\qed\end{proof}

We have a lower bound on the integrality gap of $2(k-1+1/k) = 3$ from the previous section.  To complete our proof, we note that by assuming an approximate convex decomposition for each  extreme point, $\frac{1}{3}\hat{x}^i = \sum_j \lambda_j \chi^{ij}$, we may obtain a decomposition for $\frac{1}{3}x^*$ as $\sum_i \sum_j \mu_i \lambda_j \chi^{ij}$.

\begin{theorem}The integrality gap of the natural LP formulation for $2$-CS PIP is 3.
\end{theorem}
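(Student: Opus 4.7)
The plan is to combine the lower bound on the gap recorded in Section~\ref{sec:k-HDM} with a matching upper bound of $3$ obtained from the non-efficient extension of iterative packing just described. For the lower bound, a projective plane of order $k-1=1$, with all edge demands equal to $d$ and all capacities equal to $2d-1$, witnesses an integrality gap approaching $2(k-1+1/k)=3$ in the $k=2$ case; so only the upper bound requires new work.

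For the upper bound I would establish, by induction on the support size of an extreme point $\hat{x}$, that every extreme point of the natural LP admits a $\tfrac{1}{3}$-approximate convex decomposition $\tfrac{1}{3}\hat{x} = \sum_j \lambda_j \chi^{j}$ into integral feasible solutions. The base case is the empty support, handled by the empty solution with multiplier $1$. For the inductive step, Lemma~\ref{lem:2-CS-PIP-ext-point} gives a dichotomy: either some component of $\hat{x}$ has an integral edge, or the fractional support of $\hat{x}$ is a disjoint union of cycles. Step~2 of the algorithm picks an integral edge when available and any edge otherwise. Deleting the chosen $e$ yields a feasible $\bar{x}$ of strictly smaller support; the residual need not be extreme, so step~1 is invoked to write $\bar{x} = \sum_i \nu_i \tilde{x}^i$ as a convex combination of extreme points, each with support strictly smaller than that of $\hat{x}$. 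The inductive hypothesis delivers $\tfrac{1}{3}$-approximate decompositions of each $\tilde{x}^i$, which convex-combine into a decomposition of $\tfrac{1}{3}\bar{x}$. Lemma~\ref{lem:pack-1-edges} then guarantees that step~4 can insert $e$ into the required $\tfrac{1}{3}\hat{x}_e$ fraction of atoms, using the cloning device of Section~\ref{sec:iterative-packing} to enforce exactness.

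With the extreme-point case in hand, a general feasible $x^*$ is written as $x^* = \sum_i \mu_i \hat{x}^i$ via Carath\'eodory's theorem, and the per-extreme-point decompositions assemble as $\tfrac{1}{3}x^* = \sum_{i,j}\mu_i \lambda_j \chi^{ij}$, exactly as noted at the end of the excerpt. Taking $x^*$ to be an LP optimum, the best atom has weight at least $\tfrac{1}{3}(c x^*)$, certifying an integrality gap of at most $3$.

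The main subtlety I anticipate is bookkeeping rather than combinatorics: ensuring the recursion is well-founded (via the strict decrease in support each time an extreme point is handled by step~3), that cloning in step~4 preserves the invariant that $e$ appears in exactly a $\tfrac{1}{3}\hat{x}_e$ fraction of atoms, and that assembling the per-extreme-point decompositions yields a genuine convex combination of $\tfrac{1}{3}x^*$. Once these details are settled, Lemmas~\ref{lem:2-CS-PIP-ext-point} and \ref{lem:pack-1-edges} do all the combinatorial work and the theorem follows.
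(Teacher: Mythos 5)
Your proposal is correct and takes essentially the same route as the paper: the lower bound via the order-$1$ projective plane with demands $d$ and capacities $2d-1$, and the upper bound via the non-efficient iterative packing extension, using Lemma~\ref{lem:2-CS-PIP-ext-point} for the extreme-point dichotomy and Lemma~\ref{lem:pack-1-edges} to justify the insertion step, then assembling the per-extreme-point decompositions into one for $\tfrac{1}{3}x^*$. Your only addition is the explicit well-foundedness bookkeeping for the recursion, which the paper leaves implicit (and which is most cleanly phrased as induction on the number of edges in the instance rather than on the support size, since deleting a $0$-edge shrinks the former but not the latter).
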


\subsection{A Polynomial Time Implementation}

Unfortunately we do not currently have a means of directly implementing the algorithm above in polynomial time; however,  Shepherd and Vetta~\cite{ShepVet-MOR07} analyze the fractional structure of extreme points of the natural LP for demand matching.  We are able to develop a polynomial time algorithm by relying on generalizations of their demand matching insights.  A key ingredient used by Shepherd and Vetta is the augmentation of a fractional path (Sect. 4.1 in~\cite{ShepVet-MOR07}). We begin by giving a generalization of this tool for the case of $2$-CS PIP.

\begin{lemma}\label{lem:augmentation}
Let $P = (v_0, e_1, v_1, e_2, \ldots, v_k)$ be a path; there exists an augmentation vector $z \in \bbbr^E$ such that
\begin{enumerate}
\item $\sum_{uv \in \delta(u)} d^{uv}_u z_{uv} \not= 0,$ for $u = v_0, v_k$ 
\item $\sum_{uv \in \delta(u)} d^{uv}_u z_{uv} = 0,$ for all other $u \in V$
\end{enumerate}
\end{lemma}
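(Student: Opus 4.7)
The plan is to construct $z$ supported entirely on the edges of $P$, setting $z_e = 0$ for every $e \notin \{e_1,\ldots,e_k\}$, and then to determine the values $z_{e_1},\ldots,z_{e_k}$ by propagating along the path. Concretely, I would fix $z_{e_1}$ to any nonzero value (say $z_{e_1} = 1$) and then use the zero-weighted-degree constraint at each interior vertex as a one-dimensional linear recurrence that forces the value on the next edge.

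The key step is this: for each interior vertex $v_i$ with $1 \leq i \leq k-1$, the only path edges incident to $v_i$ are $e_i$ and $e_{i+1}$, so condition (2) reduces to
\[
d^{e_i}_{v_i}\, z_{e_i} + d^{e_{i+1}}_{v_i}\, z_{e_{i+1}} = 0,
\]
which rearranges to $z_{e_{i+1}} = -\bigl(d^{e_i}_{v_i}/d^{e_{i+1}}_{v_i}\bigr)\, z_{e_i}$. Since all demands are strictly positive integers (hence nonzero), this recurrence is well defined, and an easy induction on $i$ shows $z_{e_i} \neq 0$ for every $i$. The endpoint conditions then follow for free: at $v_0$ the only incident path edge is $e_1$, giving $\sum_{uv \in \delta(v_0)} d^{uv}_{v_0} z_{uv} = d^{e_1}_{v_0}\, z_{e_1} \neq 0$, and symmetrically at $v_k$ we get $d^{e_k}_{v_k}\, z_{e_k} \neq 0$.

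There is no real obstacle here; the main thing to verify carefully is simply that the cascade never produces a zero value, which uses nothing more than positivity of the demands. The construction is entirely local to $P$, which will also be convenient when this lemma is invoked to modify a fractional extreme point without disturbing tight constraints elsewhere. If desired, one can normalize $z_{e_1}$ so that the resulting augmentation shifts the weighted degree at $v_0$ by a prescribed amount, but that scaling is cosmetic and not needed for the statement as given.
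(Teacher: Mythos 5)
Your construction is exactly the paper's: set $z$ to zero off the path, fix $z_{e_1}=1$, and propagate via $z_{e_{i+1}} = -(d^{e_i}_{v_i}/d^{e_{i+1}}_{v_i})\,z_{e_i}$, with nonvanishing following by induction from positivity of the demands. This matches the paper's proof in both method and detail.
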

\begin{proof}  We set $z_e = 0$ for all $e \notin E(P)$, and we set $z_{e_1} = 1$.  We now set the value of $z_{e_{i+1}}$, for $i \geq 1$, based on the value of $z_{e_i}$ as follows:
$$ z_{e_{i+1}} = -(d^{e_i}_{v_i}/d^{e_{i+1}}_{v_i}) z_{e_i}\enspace.$$
The first condition is satisfied since $z_{e_i} \not = 0 \Rightarrow z_{e_{i+1}} \not= 0$, and the second condition holds since $d^{e_{i+1}}_{v_i} z_{e_{i+1}} = -d^{e_i}_{v_i} z_{e_i}$.
\qed\end{proof}

\paragraph{\textbf{Algorithm.}} We will explain the utility of the above lemma in just a moment; however, first we give an overview of our algorithm:
\begin{enumerate}
\item Find an extreme point $\hat{x}$ of the natural $2$-CS PIP LP.
\item Delete any $0$-edges and iterate on the $1$-edges until a complete fractional solution $\bar{x}$ remains.
\item We will show that $\bar{x}$ possesses a structure that allows us to infer a $3$-approximation algorithm based on a result of Shepherd and Vetta.
\item Apply a result of Carr and Vempala~\cite{CarrVempala-RSA02} with the above $3$-approximation algorithm as an approximate separation oracle to obtain an approximate convex decomposition of $\bar{x}$ in polynomial time.
\item Pack the removed $1$-edges into $1/3$ of the solutions from the above decomposition.
\end{enumerate}

The basic idea is to use a more complex base case for iterative packing, rather than the default case of an empty graph.  In the above algorithm steps 3 and 4 represent the base case.  We address the results employed in these steps in turn.

\paragraph{\textbf{Analysis.}} First we describe the 3-approximation derived from Shepherd and Vetta's work.  Note that in step 3, we have a solution $\bar{x}$ that contains precisely the fractional components of an extreme point $\hat{x}$.  By Lemma~\ref{lem:2-CS-PIP-ext-point}, each component induced by $\bar{x}$ is either a tree or unicyclic.  For the former case, we can apply a generalization of Thereom 4.1 from Shepherd and Vetta~\cite{ShepVet-MOR07}, which yields a 2-approximation with respect to a fractional solution $x^*$ whose support is a tree.  They use path augmentations to derive this result for demand matching, for which we give an appropriate generalization in Lemma~\ref{lem:augmentation}.

\paragraph{\textbf{A 3-approximation.}} We briefly explain the basic idea behind the 2-approximation mentioned above and recommend the reader consult Sect. 4.1 in~\cite{ShepVet-MOR07} for a rigorous proof.  Since $x^*$ induces a tree, we can find a path $P$ between two leaves $s$ and $t$.  We apply Lemma~\ref{lem:augmentation} on $P$ to obtain $z$; we are able to select an $\varepsilon \not= 0$ such that:
\begin{itemize}
\item $x^* + \varepsilon z$ earns cost at least that of $x^*$
\item $x^* + \varepsilon z$ is still feasible at every vertex except possibly $s$ and $t$
\item $x^* + \varepsilon z$ has some integral edge
\end{itemize} 
Thus we obtain a new solution of no worse cost, but it may be infeasible at $s$ and $t$.  We temporarily remove any integral edges to obtain smaller trees and continue the procedure to obtain a collection of integral edges that are not necessarily feasible but are of superoptimal cost.  The key observation is that when a vertex becomes infeasible, it is a leaf, which allows Shepherd and Vetta to conclude in the final solution, at every vertex there is at most one edge whose removal results in a feasible solution.  Since the edges form a forest, Shepherd and Vetta are able to partition the edges into two sets such that each is feasible, yielding a 2-approximation.

Returning to our algorithm, each component of $\hat{x}$ contains at most one cycle, thus for a given cost function, we may delete the cheapest edge from each such cycle to retain a solution of cost at least $2/3$ that of $\hat{x}$.  This leaves a forest on which we apply Shepherd and Vetta's procedure to obtain an integral solution of cost at least $1/3 = 1/2\cdot 2/3$ that of $\hat{x}$.  The trouble is that although this gives a $3$-approximation, we actually need a convex decomposition of $1/3\hat{x}$ in order to pack the $1$-edges removed in the second step.  Luckily, we are able to appeal to the following result Carr and Vempala~\cite{CarrVempala-RSA02} to obtain such a decmoposition.

\begin{theorem}(Thm 2,~\cite{CarrVempala-RSA02}) Given an LP relaxation P, an r-approximation heuristic A, and any solution $x^*$ of P, there is a polytime algorithm that finds a polynomial number of integral solutions $z^1, z^2, \ldots$ of P such that
$$\frac{1}{r}x^* \leq \sum_j \lambda_j \chi^{z^j}$$
where $\lambda_j \geq 0$ for all $j$, and $\sum_j \lambda_j = 1$.
\end{theorem}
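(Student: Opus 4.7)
} The plan is to establish the decomposition by LP duality and then realize it in polynomial time via the ellipsoid method, using the heuristic $A$ as an approximate separation oracle. Let $\mathcal{Z}$ denote the (possibly exponentially large) set of integral feasible solutions of $P$, and consider the master LP of finding $\lambda \in \bbbr^{\mathcal{Z}}$ with $\lambda \geq 0$, $\sum_{z \in \mathcal{Z}} \lambda_z = 1$, and $\sum_{z \in \mathcal{Z}} \lambda_z \chi^{z} \geq \frac{1}{r} x^*$ coordinatewise. This is precisely the system whose feasibility is the content of the theorem.

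I would first establish feasibility of this system by Farkas' lemma: if it were infeasible, there would exist $y \geq 0$ (one coordinate per coordinate of $x^*$) and a scalar $\mu$ with $y^{T} \chi^{z} \leq \mu$ for every $z \in \mathcal{Z}$ and yet $\frac{1}{r} y^{T} x^* > \mu$. Running $A$ on $P$ with objective $y$ (which is nonnegative, as needed for a packing-type heuristic) returns some $z^* \in \mathcal{Z}$ of value at least $\frac{1}{r}$ times the LP optimum in direction $y$; since $x^*$ is itself feasible for $P$, that LP optimum is at least $y^{T} x^*$, so $y^{T} z^* \geq \frac{1}{r} y^{T} x^* > \mu$, contradicting the separating inequality and proving existence.

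To obtain the decomposition in polynomial time, I would apply the ellipsoid method to the dual of the master LP, using $A$ as an approximate separation oracle: given a candidate dual point $(y, \mu)$, invoke $A$ with weights $y$ and test whether its output $z^*$ satisfies $y^{T} z^* > \mu$; if so, the column $\chi^{z^*}$ yields a violated dual constraint, and otherwise one certifies approximate dual feasibility. The ellipsoid method halts in polynomially many iterations, each of which surfaces at most one integral solution $z^{j}$; restricting the primal to the polynomial set of columns $\{\chi^{z^{j}}\}$ that arose and solving it directly then yields the desired coefficients $\lambda_{j}$ with $\sum_{j} \lambda_{j} \chi^{z^{j}} \geq \frac{1}{r} x^*$ and $\sum_{j} \lambda_{j} = 1$.

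The main obstacle is calibrating the setup so that an $r$-approximate oracle genuinely suffices for exact separation. The factor $1/r$ on the right-hand side of the primal is precisely the slack that allows $A$'s relative guarantee to close the gap: the Farkas argument above shows that whenever an $(y,\mu)$ fails dual feasibility, $A$ is guaranteed to produce a witness of that failure, not merely a near-witness. Without this calibration, $A$ would only separate up to a factor of $r$, which is insufficient to drive the ellipsoid method, so the proof hinges on arranging the dual inequality so that the approximation factor of $A$ is absorbed exactly once.
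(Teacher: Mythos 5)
The paper does not prove this theorem at all---it is imported verbatim from Carr and Vempala, with only the remark that it is ``essentially an application of LP duality and the polynomial time equivalence of separation and optimization''---and your argument is precisely that standard proof, correctly adapted to the packing orientation. You also correctly isolate the one subtlety the paper flags, namely that $A$ must be an $r$-approximation relative to the LP bound itself, which is exactly what makes the $1/r$ slack turn $A$ into an exact (not merely approximate) separation oracle for the dual.
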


The theorem from Carr and Vempala is for covering problems; however, their ideas yield the result for the packing case as well.  We also note that the heuristic $A$ must be an $r$-approximation with respect to the lower bound given by the relaxation $P$.

\paragraph{\textbf{Obtaining an exact decomposition.}} The only remaining detail is that  for our purposes, we require an exact decomposition of $x^*/r$ (i.e. $x^*/r = \sum_j \lambda_j \chi^{z^j}$).  We observe that this can be done by at most doubling the number integral solutions $z^j$.  For any edge $e$ such that $x_e^*/r > \sum_j \lambda_j \chi^{z^j}_e$, we remove $e$ from solutions that contain $e$.  We continue this until removing $e$ from any solution would give us $x_e^*/r \leq \sum_j \lambda_j \chi^{z^j}_e$.  Now we clone some solution $z^j$ that contains $e$ and remove $e$ from the clone; we distribute $\lambda_j$ between these two solutions to attain $x_e^*/r = \sum_j \lambda_j \chi^{z^j}_e$.  Finally, Lemma~\ref{lem:pack-1-edges} shows that step 5. may be completed.

Carr and Vempala's result is essentially an application of LP duality and the polynomial time equivalence of separation and optimization.  It certainly seems likely that a more direct analysis of $1/3\hat{x}$ could be used to construct a convex decomposition; however, we are enamored with the elegance of Carr and Vempala's result.

\section{Concluding Remarks}

We have obtained simple iterative packing algorithms for $k$-hypergraph demand matching problems that essentially settle the gap of the respective natural LP relaxations.  Obvious open questions include generalizing our work to $k$-CS-PIP.  We are currently investigating this and have derived promising partial results.   It is conceivable that one may also develop an analogue of iterative packing for covering that appeals to approximate convex decompositions.

\bibliographystyle{splncs03}
\bibliography{iterative-packing}

\end{document}